\theoremstyle{plain}
\newtheorem{theorem}{Theorem}[section]
\newtheorem{lemma}[theorem]{Lemma}
\theoremstyle{definition}
\newtheorem{algorithm}[theorem]{Algorithm}
\newtheorem{program}[theorem]{Program}
\theoremstyle{remark}
\newenvironment{prg}[2]{
  \begin{samepage}
  \begin{program}
    \label{#1} $ $ \\[-11pt]
    \begin{itemize}
    \item [ ] #2 subject to:
      \begin{enumerate}
}{
      \end{enumerate}
    \end{itemize}
  \end{program}
  \end{samepage}
}
\def\bar{{\;|\;}}
\def\ep{{\varepsilon}}
\def\too{{\rightarrow}}
\newcommand{\maxx}[2]{{\mathrm{max}_{#1}\left\lbrace#2\right\rbrace}}
\newcommand{\en}[2]{{\mathrm{e}_{#1}\!\left(#2\right)}}
\begin{document}
\title{Scheduling under Precedence, Communication,\\and Energy
  Constraints\thanks{This material is based upon work supported by the National
    Science Foundation under Grant No. CCF-1016540.}}
\author{David Felber\footnote{University of California at Los Angeles,
    Department of Computer Science. Email: dvfelber@cs.ucla.edu.} \and Adam
  Meyerson\footnote{University of California at Los Angeles, Department of
    Computer Science. Email: awm@cs.ucla.edu.}}
\date{}
\maketitle
\begin{abstract}
We consider the problem of scheduling a set of $n$ tasks on $m$ processors under
precedence, communication, and global system energy constraints to minimize
makespan. We extend existing scheduling models to account for energy usage and
give convex programming algorithms that yield essentially the same results as
existing algorithms that do not consider energy, while adhering to a strict
energy bound.
\end{abstract}
\section{Introduction}
We consider the problem of scheduling a set of $n$ tasks on $m$ homogeneous
processors under precedence, communication, and global system energy constraints
to minimize makespan. This problem is of particular importance in portable and
embedded systems \cite{KAS2010,AAWMI2009,KR2009,DM1995}, where the
power demands of a growing number of computationally intensive
applications \cite{GVV2009,LTK2004} outmatch the growth rate of battery energy
density \cite{HKQPS1999}. It is also important in high-performance
systems and data centers \cite{SI2010} where the operational costs of
powering and cooling \cite{FKLB2008,HF2005,Gioiosa2010} and related
reliability issues from power dissipation \cite{WLDW2010,HF2005} are
substantial. Because multi-core processors are the
industry's answer to the power and thermal constraints limiting clock
speed \cite{RVC2007,Geer2005}, general scheduling methods that conserve energy
and minimize running times are needed to fully and efficiently use these systems.

The problem of scheduling tasks on processors to minimize the makespan (overall
runtime) has a long history \cite{Drozdowski2009}. In the most general form, we
are given a number of tasks to assign to processors, such that no processor may
run more than one task at a time and the goal is to minimize the makespan
(latest completion time). Distributing the tasks amongst processors typically
reduces the makespan; however the problem presents several issues which make the
solution non-trivial. First, a system has a finite number of processors which is
typically much less than the number of tasks, implying that some tasks must be
allocated to the same processor (thus potentially delaying their completion).
Second, there is communication between tasks in the form of precedence
constraints. When one task requires the output of another, we cannot schedule
them to run in parallel on different processors. Third, there is communication
between processors; the implication is that if the output of task $i$ is
required by task $j$ which is scheduled to run on a different processor, then
there will be some additional time delay $c_{ij}$ after the completion of $i$
for the necessary information to arrive. Note that this delay can be avoided by
scheduling $i,j$ on the same processor.

We further consider a overall bound on the total energy consumed by all tasks.
Our goal is to produce a smooth tradeoff between the total energy consumption
and the makespan, which is achievable by varying the energy bound. We observe
that some previous scheduling models permitted re-computation of tasks
(computing the same task multiple times on different processors to save on
communication delays); this is energy-wasteful and our models will assume that
such re-computation does not occur. Previous work on energy often tried to
reduce the energy consumption without any reduction to the makespan; this
approach has some merits but essentially treats energy as a ``secondary''
objective rather than producing a tradeoff between energy and makespan. Our
model also extends previous work by allowing the tradeoff between running time
and energy to be arbitrary (but convex) and task-dependent; this makes sense in
the context of tasks which heavily load different system components (i.e.
processor, network card, memory) and therefore may behave differently under
speed-scaling.

\paragraph{Problem Definition}
We formally state the problem as follows. We are given (1) a set $J$ of $n$
tasks; (2) $m$ processors; (3) a directed acyclic graph $G$ on $n$ nodes
representing precedence constraints between tasks; (4) a communication delay
$c_{ij}$ for each edge $(i,j)$ in $G$; (5) a set of energy/time tradeoff
functions $\en{j}{d_j}$; and (6) an energy bound $E$.

The goal is to construct a schedule $\sigma = \{\langle{}j,I,p\rangle{}\}$ where
task $j$ is assigned over time interval $I$ to processor $p$, such that (a) if
there is an edge from $i$ to $j$ in $G$, then task $i$ must finish before $j$
can begin; (b) each processor can only work on one task at any time; (c) $d_j$
is the total time for which task $j$ runs over all assigned intervals; (d) the
total energy used, $\sum_j \en{j}{d_j}$, is bounded by $E$; and (e) the time at
which the last task completes (the \textit{makespan}) is minimized.

When the communication delays are non-zero, we restrict attention to schedules
that do not migrate tasks, and require the additional constraint (f) that if
$(i,j)$ is an edge in $G$ and $i$ and $j$ are scheduled on different processors,
then $j$ cannot start until $c_{ij}$ time after $i$ has completed.

\paragraph{Previous Work}
Most previous work is experimental and considers the problem in the context of
dynamic voltage scaling, for which speed (operating frequency) is approximately
proportional to the voltage and for which power is approximately proportional to
voltage cubed \cite{SA2001}. This work includes more recent heuristics that
minimize makespan given a hard energy bound \cite{KAS2010,AAWMI2009} as well as
heuristics that minimize energy given hard timing constraints
\cite{SI2010,WTLC2010,GVV2009,KR2009,FKLB2008}. For both variations, the
approach generally taken is to create an initial schedule in which tasks are
scheduled at the highest speed possible, and then to reduce the speed of tasks
in such a way that the schedule length does not increase.

Several heuristic approaches also incorporate mathematical programming. Rountree
et al \cite{RLFFSS2007} use a linear program to bound the optimal solution from
below. Kang and Ranka \cite{KR2008a} and Leung et al \cite{LTK2004} use
heuristics to approximate the optimal solutions to integer programs. Zhang et al
\cite{ZHC2002} use a mathematical program to perform voltage scaling after the
tasks have been scheduled. Unlike our methods, these approaches do not yield
schedules with provable guarantees.

The only previous work that yields schedules with provable
guarantees while considering energy is by Pruhs et al \cite{PSU2008}. They develop a speed-scaling processor model for precedence-constrained tasks with three primary results. The
first is a proof that running all processors at a single fixed speed is at best
$\Omega(poly(m))$ approximate. The second is a proof that the total power across
all processors is constant over time in any optimal solution. The third result
is an algorithm that is $O(1)$ approximate in energy and $O(\log^{1+2/\alpha}
m)$ approximate in the makespan for a model-dependent constant $\alpha$. This
algorithm reduces the problem for $m \le n$ processors to the problem of
scheduling tasks on $m$ related processors (considered by Chekuri and Bender in
\cite{CB2001}) that run at speeds that are a function of a power level $p$,
which is chosen using a binary search to give the energy bound. We improve on
this result in section \ref{sec:nocomm-m}, where we combine convex programming
with the list scheduling result by Graham \cite{Graham1966,Graham1969} to obtain
an algorithm that is $(2-1/m)$-approximate in the makespan and satisfies the
energy bound exactly.

Algorithms for precedence-constrained task scheduling that give provable
guarantees on makespan (but without accounting for energy
considerations) have been an active area of research since the last millennium.
Graham \cite{Graham1966,Graham1969} gives a $(2-1/m)$-approximation algorithm
for the $m < n$ case and Fujii et al \cite{FKN1969} and Coffman and Graham
\cite{CG1972} give exact algorithms for the $m = 2$ case that use preemption and
migration. Papadimitriou and Yannakakis \cite{PY1988,PY1990} give a
$2$-approximate algorithm for the $m \ge n$, common communication delay $c_{ij}
= \tau$ case that uses recomputation, and Jung et al \cite{JKS1993} give an
exact algorithm for this case that is exponential in fixed integer $\tau$.
Munier and Konig \cite{MK1997} give an algorithm for $m \ge n$ and small
communication delays (where the duration of a task is at least $\rho \ge 1$
times any communication delay) that is $\beta$-approximate for $\beta =
\frac{2+2\rho}{1+2\rho}$, and Hanen and Munier \cite{HM1995,HM2001} give an
algorithm for the $m < n$ case that is $(1+\beta(1-1/m))$-approximate. A recent
survey by Drozdowski \cite{Drozdowski2009} details algorithms and heuristics for
the scheduling problem with communication delays.

\paragraph{Our Contributions}
Our primary contribution is to show that convex programming formulations of many
scheduling problems permit energy constraints to be simply appended; this enables us to produce comparable provable results to the energy-blind case without substantially more complex analysis. In section \ref{sec:nocomm} we consider the case where
the communication delays are all zero. We obtain optimal schedules when $m \ge
n$ and for $m = 2$; and for $m < n$, we obtain $(2-1/m)$-approximate schedules.
These results are analogous to the energy-blind results of
\cite{Graham1966,Graham1969} and \cite{FKN1969,CG1972}, and improve on the
result in \cite{PSU2008}.

In section \ref{sec:smcomm} we consider the case where the communication delays
may be non-zero. For small communication delays we obtain $\beta$-approximate
schedules for $m \ge n$ processors, $\beta = \frac{2+2\rho}{1+2\rho}$, and
$(1+\beta(1-1/m))$-approximate schedules for arbitrary $m$. These results extend
the energy-blind results in \cite{HM1995,MK1997,HM2001}. For large communication
delays we extend the approach of \cite{GKMP2008} to obtain
$\frac{2(R+1)}{3}$-approximate schedules. In these cases, $\rho$ and $R$ are
parameters to the algorithm that bound the relative size of the delays.

\paragraph{Discussion}
The variants for which we obtain approximations are NP-complete
\cite{Ullman1975,RaywardSmith1987,PY1988} and are therefore unlikely to have
fast exact solutions. We obtain our results modulo $\ep$ error due to the finite
precision used in solving convex programs. To simplify our analyses we do not
mention this term further.

Except in section \ref{sec:nocomm-twoproc}, our algorithms do not use preemption
(stopping and starting of tasks), migration (moving tasks from one processor to
another), or recomputation (computing a task more than once.) Our algorithms in
section \ref{sec:nocomm} are approximate even to optimal algorithms that do have
these properties. Our algorithms in section \ref{sec:smcomm} are approximate to
optimal algorithms that do not have these properties.

We consider cases in which $e_j$ and $d_j$ are inversely related according to a
convex function; that is, $\en{j}{d_j}$ is convex and non-increasing. It is
natural to make this assumption for several reasons. Convexity is a more general
assumption than either the speed scaling model in \cite{PSU2008} or the dynamic
voltage scaling model assumed by the experimental work cited above; the
time/energy tradeoff in both of these models is convex. We consider separate
functions for each task because they may vary in their use of resources and
therefore may not have the same curve, even when a measure of ``work'' for a
task is taken into account. Lastly, because we consider a homogeneous system in
which there is no contention for resources (other than processors) by tasks, the
functions are independent of processors and of other tasks.

\paragraph{Notation}
We use the following notation throughout the paper. $i$, $j$, and $k$ are tasks.
$p$ and $q$ are processors. $p_j$ is the processor on which $j$ is scheduled. We
write $i \too j$ if the edge $(i,j)$ exists in $G$, $i < j$ if there is a
(directed) path from $i$ to $j$ in $G$, and $i \sim j$ if neither $i < j$ nor $j
< i$. The duration of a task $j$ is $d_j$ and the energy it consumes is $e_j$;
these are inversely related in a model-dependent manner. $t_j$ is the start time
of task $j$. $\sigma$ is a schedule, $\mu$ is a makespan and $E$ is the energy
bound. We write $t_j(\sigma_1)$, $\mu(\sigma_2)$, etc. to differentiate between
schedule values. Task $j$ is a \textit{source} task if there is no $i \too j$
and a \textit{sink} task if there is no $j \too k$. We say a task $j$ is
\textit{active} if it is currently running on a processor and \textit{available}
if its predecessors have all completed (regardless whether it's active.) We say
a processor is \textit{active} at time $t$ if it is working on a task at time
$t$, else \textit{idle}.

All of these parameters are required to be non-negative. We abuse notation and
use $I$ to refer to both an interval $I = [a,b]$ and its length $b - a$, and we
use $S$ to refer to both the set $S$ and the number of elements it contains. If
$S$ is a set of disjoint intervals, we also write $S$ for $\sum_{I \in S} I$.

\paragraph{Organization}
We organize the rest of the paper as follows. Section \ref{sec:nocomm} contains
our results for the case in which the communication delays are all zero. In
section \ref{sec:smcomm} we extend the results of \cite{HM1995,MK1997,HM2001} to
the case in which there are small communication delays and the result of
\cite{GKMP2008} to the case in which the delays are large.
\section{Zero Communication Delays}
\label{sec:nocomm}
In this section we consider a model in which a task $j$ may be started as soon
as all $i \too j$ have completed and there is a free processor. Our algorithms
in this section are competitive against preemptive and migratory algorithms as a
result of lemma \ref{thm:nocomm-mgen}. Recomputation never helps in this model;
if a schedule $\sigma$ that computes a task $j$ more than once removes the
second computation, the total energy will decrease, and every task can still
begin at $t_j(\sigma)$. We therefore assume that no schedule uses recomputation.
\subsection{$m \ge n$ Processors}
\label{sec:nocomm-mgen}
We use the following convex program in the $m \ge n$ case. We assume that
$\en{j}{d_j}$ can be computed in polynomial time.
\begin{prg}{prg:nocomm-mgen}{Minimize $\mu$}
\item $t_j \ge t_i + d_i$ for all $i \too j$
\item $\mu \ge t_j + d_j$ for all $j$
\item $\sum_j \en{j}{d_j} \le E$
\item \label{itm:avg-nocomm-mgen} $\mu \ge \frac{1}{m} \sum_j d_j$
\item [ ] $\mu, t_j, d_j \ge 0$
\end{prg}
Constraint \ref{itm:avg-nocomm-mgen} will be necessary in section
\ref{sec:nocomm-m}. In the case where $m \ge n$, this only constrains the
makespan to be at least the average duration.

We show that this convex program is equivalent to the scheduling problem.
\begin{theorem}
\label{thm:nocomm-mgen}
There is a solution $\mu, t_j, d_j$ to program \ref{prg:nocomm-mgen} iff there
is a feasible schedule $\sigma$ with makespan $\mu$.
\end{theorem}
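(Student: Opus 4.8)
The plan is to prove the two directions separately. The reverse direction (a program solution yields a schedule) is the easy one, and is exactly where the hypothesis $m \ge n$ does all the work; the forward direction (a schedule yields a program solution) amounts to checking that the schedule's natural parameters satisfy each constraint, with the averaging constraint \ref{itm:avg-nocomm-mgen} being the single point that needs a short argument rather than mere bookkeeping.

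For the forward direction, suppose $\sigma$ is a feasible schedule with makespan $\mu$. I would read off $d_j$ as the total time $j$ is active and $t_j$ as its start time, exactly as in the notation, and keep the same $\mu$. Constraint 3 is just the energy bound (d) from the problem definition, so it holds by feasibility. Let $C_j$ denote the completion time of $j$; since a task cannot finish before it has accumulated $d_j$ units of running time, $C_j \ge t_j + d_j$, which yields constraint 2 because $\mu = \max_j C_j$. For precedence, if $i \too j$ then condition (a) forces $C_i \le t_j$, so $t_j \ge C_i \ge t_i + d_i$, giving constraint 1. Finally, the total processor-time available during $[0,\mu]$ is $m\mu$, and every unit of each $d_j$ is charged against this budget, so $\sum_j d_j \le m\mu$, which is constraint \ref{itm:avg-nocomm-mgen}. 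I would emphasize that none of these inequalities used non-preemption or non-migration: they rely only on $C_j \ge t_j + d_j$ and on total work fitting in the available processor-time. This robustness is precisely what makes the resulting algorithm competitive against preemptive and migratory optima, as promised in the section preamble.

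For the reverse direction, suppose $\mu, t_j, d_j$ solve the program. Because $m \ge n$, I can assign each task its own dedicated processor and run task $j$ on the single interval $[t_j, t_j + d_j]$. Condition (b) (one task per processor at a time) is then immediate since no two tasks share a processor, and (c) holds by construction. Condition (d) is constraint 3. For precedence (a), constraint 1 gives $t_j \ge t_i + d_i$ whenever $i \too j$, so $i$ finishes before $j$ starts. The last completion time is $\max_j (t_j + d_j) \le \mu$ by constraint 2, so the makespan is at most $\mu$; if an exact value is wanted one simply pads the latest-finishing task out to $\mu$, which keeps every constraint satisfied, so (e) holds.

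The only genuine subtlety — and the reason this clean equivalence is special to $m \ge n$ — is that the program carries no variable recording a task-to-processor assignment; it constrains the timing only through the precedence, makespan, and averaging inequalities. When $m \ge n$ this loss is harmless, since a dedicated processor per task makes condition (b) vacuous, but for $m < n$ the averaging constraint alone cannot certify feasibility, which is exactly why the later sections must graft list scheduling onto the program. I therefore expect constraint \ref{itm:avg-nocomm-mgen} to be the main thing to argue carefully in the forward direction, and I would take pains to derive it purely from a volume (total-work) bound so that it survives preemption and migration; with that in hand, both directions reduce to verification.
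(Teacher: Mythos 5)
Your proposal is correct and follows essentially the same route as the paper: the paper's proof consists of exactly the two conversions you describe (algorithm \ref{alg:nocomm-mgen-prg2sig} schedules each task on its own processor on $[t_j, t_j+d_j]$, and algorithm \ref{alg:nocomm-mgen-sig2prg} reads off $\mu$, $t_j$, and $d_j$ from a schedule and checks the four constraints, including the averaging constraint via the same total-work-versus-$m\mu$ argument). Your added remarks on padding to hit makespan exactly $\mu$ and on the forward direction surviving preemption and migration are correct elaborations of points the paper leaves implicit.
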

\begin{proof}
We prove the theorem by constructing a solution to one out of a solution to the
other. The following two algorithms perform these conversions.
\end{proof}
\begin{algorithm}
\label{alg:nocomm-mgen-prg2sig}
Given a solution $\mu, t_j, d_j$ to program \ref{prg:nocomm-mgen}, schedule each
task $j$ in $\sigma$ at time $t_j$ with duration $d_j$. Always schedule $j$ on
its own processor.

Precedence and energy constraints are satisfied immediately. $j$ does not cause
co-occurrence conflicts with any other task $k$ because it is on its own
processor. $\sigma$ uses $n \le m$ processors.
\end{algorithm}
\begin{algorithm}
\label{alg:nocomm-mgen-sig2prg}
Given a schedule $\sigma$ on $m$ processors, construct $\mu, t_j, d_j$ as
follows. Set $\mu = \mathrm{makespan}(\sigma)$ and $t_j = t_j(\sigma)$. Let
$S_j$ be the set of intervals over which $j$ is active in $\sigma$. Set $d_j =
S_j$.

Constraint $1$ is satisfied because $t_j(\sigma)$ is at least the end of the
last interval in $S_i$ for $i \too j$, and the first interval in $S_i$ does not
begin until $t_i(\sigma)$. Constraint $2$ is satisfied similarly. Constraint $3$
is satisfied because $\sigma$ is feasible, and constraint $4$ is satisfied
because $\sigma$ uses at most $m$ processors and the makespan is at least the
average load.
\end{algorithm}
Because program \ref{prg:nocomm-mgen} is a convex optimization problem with
polynomially many constraints, it can be solved in polynomial time. To generate
an optimal schedule $\sigma$ we first solve program \ref{prg:nocomm-mgen} and
then use algorithm \ref{alg:nocomm-mgen-prg2sig} to construct the schedule.
\subsection{$m$ Processors}
\label{sec:nocomm-m}
In the case where $m$ is arbitrary we obtain a $(2-1/m)$-approximation.
\begin{lemma}
\label{lem:nocomm-m}
Given a schedule $\sigma_1$ that satisfies program \ref{prg:nocomm-mgen},
algorithm \ref{alg:nocomm-m} constructs a schedule $\sigma_2$ that has makespan
$\mu_2 \le (2-1/m) \, \mu_1$ and that uses at most $E$ energy.
\end{lemma}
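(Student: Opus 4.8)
The plan is to let Algorithm \ref{alg:nocomm-m} be Graham's greedy list-scheduling rule applied to the tasks with the durations $d_j$ read off from $\sigma_1$ (which are exactly the durations of the program solution): sweep time from left to right, and whenever a processor falls idle and there is an \emph{available} task (one all of whose predecessors have completed), assign some such task to that processor. Because this rule never alters any $d_j$, the energy consumed by $\sigma_2$ is $\sum_j \en{j}{d_j}$, which is at most $E$ since $\sigma_1$ satisfies constraint $3$ of Program \ref{prg:nocomm-mgen}. Thus the energy bound is immediate and the entire burden of the lemma falls on the makespan bound.

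For the makespan I would run the classical Graham accounting. Let $J_r$ be the task that finishes last in $\sigma_2$, at time $\mu_2$, and trace a precedence chain $J_1 \too J_2 \too \cdots \too J_r$ backward from it. The structural claim I need is: there is such a chain in $G$ for which, at every time in $[0,\mu_2]$ during which no chain task is running, all $m$ processors are busy. The justification is the greedy property: whenever there is a gap between the completion of $J_i$ and the start of $J_{i+1}$, the rule would have started $J_{i+1}$ earlier had any processor been free, so every processor must have been occupied throughout that gap; choosing at each step the predecessor that was still running when the current chain task became blocked closes the recursion down to a source. This is the only genuinely delicate part of the argument.

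Granting the claim, let $P = \sum_i d_{J_i}$ be the chain length and $W = \sum_j d_j$ the total work. The chain tasks occupy disjoint intervals of total measure $P$, while on the complementary set of times (measure $\mu_2 - P$) all $m$ processors run, contributing $m(\mu_2 - P)$ to the total work; the chain tasks themselves contribute at least $P$. Hence $W \ge m(\mu_2 - P) + P$, which rearranges to $\mu_2 \le (1 - 1/m)\,P + W/m$. I then lower-bound $\mu_1$ two ways using that $\sigma_1$ satisfies the program with these same durations: iterating $t_j \ge t_i + d_i$ along the chain and applying $\mu_1 \ge t_{J_r} + d_{J_r}$ gives $\mu_1 \ge P$, while constraint \ref{itm:avg-nocomm-mgen} gives $\mu_1 \ge W/m$. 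Substituting both yields $\mu_2 \le (1-1/m)\mu_1 + \mu_1 = (2 - 1/m)\,\mu_1$, as desired; this is the classical bound of Graham \cite{Graham1966,Graham1969}.

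The main obstacle is the structural claim: making precise that every idle interval of $\sigma_2$ is covered by the execution of some chain task, so that the gaps in the chain coincide exactly with fully-busy periods. Everything after that is the elementary work/critical-path calculation together with the two lower bounds on $\mu_1$ — the second of which is precisely why constraint \ref{itm:avg-nocomm-mgen} was placed in the program even though it is vacuous in the $m \ge n$ regime.
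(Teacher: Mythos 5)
Your proof is correct, and the final accounting (all-busy time bounded by $W/m \le \mu_1$ via constraint \ref{itm:avg-nocomm-mgen}, remaining time bounded by $\mu_1$, summing to $(2-1/m)\mu_1$) matches the paper's exactly; where you genuinely diverge is in \emph{how} the not-all-busy time is charged to $\mu_1$. You use Graham's classical critical-chain construction: exhibit a precedence chain $J_1 \too \cdots \too J_r$ whose execution intervals cover every instant at which some processor idles, so the idle-ish time is at most $P = \sum_i d_{J_i}$, and then force $P \le \mu_1$ by telescoping constraint $1$ of Program \ref{prg:nocomm-mgen} along the chain. The paper instead avoids constructing a chain: it partitions $(0,\mu_2)$ into intervals $A$ (all $m$ processors busy) and $B$ (the rest) and bounds $B \le \mu_1$ directly with a potential function $\phi(t)$, defined as the earliest time by which $\sigma_1$ has finished all tasks $\sigma_2$ has finished by $t$ and matched $\sigma_2$'s progress on available tasks; on each $B$-interval all available tasks are active, so $\phi$ advances at least at unit rate, and telescoping gives $B \le \phi(\mu_2) \le \mu_1$. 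The two arguments buy essentially the same thing here --- both compare against the convex program, which (via Algorithm \ref{alg:nocomm-mgen-sig2prg}) already subsumes preemptive and migratory optima, so your chain bound $P \le \mu_1$ is just as strong as the paper's $B \le \mu_1$. Your route is the more standard and arguably more elementary one; the paper's simulation-style potential is more self-contained in that it never needs the chain to exist, only the greedy property on each $B$-interval. The one point you rightly flag as delicate --- selecting at each step a predecessor still unfinished at the last idle moment before the current chain task starts --- is the standard Graham construction and goes through without trouble; your treatment of the energy bound (durations unchanged, so constraint $3$ transfers verbatim) is identical to the paper's.
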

\begin{algorithm}
\label{alg:nocomm-m}
While there are unscheduled tasks in $\sigma_2$, let $t$ be the earliest time
for which there is both a processor $p$ that has no tasks scheduled after time
$t$ and a task $j$ whose predecessors $i \too j$ have all completed by time $t$.
Schedule $j$ for duration $d_j(\sigma_1)$ starting at time $t$ on processor $p$
in $\sigma_2$.
\end{algorithm}
\begin{proof}
We cut the time interval $(0,\mu_2)$ at each point at which some task begins or
ends, and partition these sub-intervals into two sets $A$ and $B$, where $A$
contains all intervals in which all $m$ processors are active in $\sigma_2$ and
$B$ contains the rest. $\mu_2 = A + B$. We define $W = \sum_j d_j$ for
convenience.

We bound $B \le \mu_1$ with a potential argument. For each time $t$ in
$\sigma_2$ let $F_t$ be the set of tasks finished by time $t$ and $J_t$ the set
of available tasks at time $t$. We define $\phi(t)$ as the smallest time $u$ in
$\sigma_1$ such that $\sigma_1$ finishes all tasks $F_t$ by time $u$ and has
completed at least as much work on all tasks in $J_t$ as $\sigma_2$ has
completed. We cannot have $\phi(t) > \mu_1$ since $\sigma_1$ completes all tasks
by time $\mu_1$. We must also have $\phi(0) = 0$ and $\phi(t_1) \le \phi(t_2)$
whenever $t_1 \le t_2$.

For each interval $I = (a,b)$ in $B$ it must be that all $J_a = J_b < m$
available tasks are active; otherwise, we could have scheduled an available task
on an idle processor. We must further have $\phi(b)-\phi(a) \ge b-a$; otherwise,
$\sigma_1$ could complete $I \, J_a$ work on these $J_a$ tasks in less than $I$
time. Together, these yield
\begin{align*}
B &= \sum_{(a,b) \in B} b - a \le \sum_{(a,b) \in B} \phi(b)-\phi(a) \\ &\le
\sum_{(a,b) \in A \cup B} \phi(b)-\phi(a) = \phi(\mu_2) - \phi(0) \le \mu_1
\end{align*}
$\sigma_2$ completes at least $I$ work over each interval $I$ in $B$ because in
each such $I$ there is at least one active processor; otherwise, $t$ would be
the earliest time at which we could schedule a task in $\sigma_2$, but we chose
a time $t' > t$ instead.

We bound $A \le (W-B)/m$ since there is no more than $(W-B)$ work to do in $A$
and all $m$ processors are active in $A$.

By constraint \ref{itm:avg-nocomm-mgen} in program \ref{prg:nocomm-mgen}, $\mu_1
\ge W/m$. Together with our bounds for $A$ and $B$ we have $\mu_2 = A + B \le
W/m + (1-1/m) \, B \le (2-1/m) \, \mu_1$.
\end{proof}
\begin{theorem}
We can construct a $(2-1/m)$-approximation for the case when $m$ is arbitrary.
\end{theorem}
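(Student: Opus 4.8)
The plan is to assemble the pieces already in place rather than prove anything new.

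\begin{proof}
The plan is to assemble pieces already established rather than argue from scratch: lemma~\ref{lem:nocomm-m} supplies the $(2-1/m)$ factor together with the energy guarantee, while theorem~\ref{thm:nocomm-mgen} supplies the matching lower bound. First I would solve the convex program~\ref{prg:nocomm-mgen} in polynomial time, obtaining an optimal solution $\mu_1, t_j, d_j$ in which $\mu_1$ is the minimum objective value. I then treat this solution as a reference schedule $\sigma_1$, realized for instance by algorithm~\ref{alg:nocomm-mgen-prg2sig}, which places each task on its own processor and achieves durations $d_j$ and makespan $\mu_1$. This reference schedule need not respect the $m$-processor bound, since algorithm~\ref{alg:nocomm-m} reads only the durations $d_j(\sigma_1)$.

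Next I would establish that $\mu_1$ is a genuine lower bound on the optimal makespan $\mathrm{OPT}$ for the $m$-processor problem; this is the crux of the argument and the only step that is not purely mechanical. It follows from the schedule-to-program direction of theorem~\ref{thm:nocomm-mgen}: an optimal feasible $m$-processor schedule $\sigma^*$ maps, via algorithm~\ref{alg:nocomm-mgen-sig2prg}, to a feasible solution of program~\ref{prg:nocomm-mgen} whose objective equals $\mu(\sigma^*) = \mathrm{OPT}$. That conversion is valid for arbitrary $m$, because constraint~\ref{itm:avg-nocomm-mgen} is exactly what encodes the $m$-processor restriction as the average-load bound $\mu \ge \frac{1}{m}\sum_j d_j$. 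Hence the minimum program value satisfies $\mu_1 \le \mathrm{OPT}$.

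Finally I would apply lemma~\ref{lem:nocomm-m} to $\sigma_1$ to produce $\sigma_2$ with $\mu_2 \le (2-1/m)\,\mu_1$ and energy at most $E$, and chain the two inequalities to obtain $\mu_2 \le (2-1/m)\,\mathrm{OPT}$. Combined with the polynomial-time solvability of the convex program and of algorithm~\ref{alg:nocomm-m}, this exhibits $\sigma_2$ as the desired approximation. The one subtlety worth flagging lies in the lower-bound step: algorithm~\ref{alg:nocomm-mgen-sig2prg} converts \emph{any} feasible schedule into a program solution of equal makespan, including preemptive or migratory ones, since it sets $d_j$ to the total length of the intervals on which $j$ is active. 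Thus $\mu_1$ in fact lower-bounds even the optimal preemptive and migratory makespan, so $\sigma_2$ is $(2-1/m)$-approximate against this stronger benchmark, matching the claim made at the start of the section.
\end{proof}
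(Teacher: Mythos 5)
Your proposal is correct and follows essentially the same route as the paper: solve program \ref{prg:nocomm-mgen}, observe that any optimal $m$-processor schedule yields a feasible program solution (so $\mu_1 \le \mu^*$), and then apply algorithm \ref{alg:nocomm-mgen-prg2sig} followed by lemma \ref{lem:nocomm-m} to get the $(2-1/m)$ factor. Your closing remark that the lower bound also holds against preemptive and migratory schedules is a point the paper makes separately at the start of section \ref{sec:nocomm}, but it is consistent with the paper's argument.
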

\begin{proof}
Let $\mu^*$ be the optimal makespan obtainable for $m$ processors on a given
task graph $G$ with energy bound $E$, using an optimal schedule $\sigma^*$.
Program \ref{prg:nocomm-mgen} has an objective value $\mu_1$ that is no larger
than $\mu^*$; otherwise, we could set $t_j = t_j(\sigma^*)$ and $d_j =
d_j(\sigma^*)$, and all constraints would remain satisfied.

Since $\mu_1 \le \mu^*$, we can can use algorithm \ref{alg:nocomm-mgen-prg2sig}
to create a schedule $\sigma_1$ that uses $n$ processors, and algorithm
\ref{alg:nocomm-m} to generate a schedule $\sigma_2$ that uses $m$ processors.
By the lemmas, we have $\mu(\sigma_2) \le (2-1/m) \, \mu_1 \le (2-1/m) \,
\mu^*$.
\end{proof}
\subsection{Two Processors}
\label{sec:nocomm-twoproc}
In the case where $m = 2$ we can obtain an optimal schedule if migration and
preemption are permitted. We combine convex programming with a fractional
version of the matching-based algorithm in \cite{FKN1969}.

We consider tasks $j$ according to any linear ordering $<_L$ that is an
extension of $<_G$. The duration $d_j$ of a task $j$ is broken into components.
The time when task $j$ is the only task running is $\ell_i$. The time when task
$j$ is running at the same time as another task $i <_L j$ is $\ell_{ij}$.
\begin{prg}{prg:nocomm-twop-lp}{Minimize $\mu = \sum_j \ell_j + \sum_i \sum_{j >_L i} \ell_{ij}$}
\item \label{itm:dj-nocomm-twop} $d_j = \ell_j + \sum_{i <_L j} \ell_{ij} +
  \sum_{k >_L j} \ell_{jk}$ for all $j$
\item \label{itm:prec-nocomm-twop} $\ell_{ij} = 0$ for all $i <_G j$
\item [ ] $\ell_j, \ell_{ij} \ge 0$
\end{prg}
We extend this linear program with the addition of a convex energy constraint.
We consider $d_j$ to be a fixed value in program \ref{prg:nocomm-twop-lp} and a
variable in program \ref{prg:nocomm-twop}.
\begin{program}
  \label{prg:nocomm-twop} $ $ \\[-11pt]
  \begin{itemize}
  \item [ ] Program \ref{prg:nocomm-twop-lp} subject to the additional constraint:
    \begin{enumerate}
      \setcounter{enumi}{2}
    \item \label{itm:en-nocomm-twop} $\sum_j \en{j}{d_j} \le E$
    \end{enumerate}
  \end{itemize}
\end{program}
We can construct a solution to program \ref{prg:nocomm-twop} from a feasible
schedule $\sigma$ by setting $\ell_j$ ($\ell_{ij}$) as the sum over intervals in
which only task $j$ (tasks $i$ and $j$) are active.

To construct a feasible schedule from a solution to the program, we use the
following lemma as a subroutine.
\begin{lemma}
\label{lem:nocomm-twop}
Let $S$ be the set of source tasks in $G$. Given a solution to program
\ref{prg:nocomm-twop-lp} with objective value $\mu$, we can construct a new
solution with the same objective value that satisfies the condition $\exists s
\in S \; \forall j \notin S \left(\ell_{sj} = 0\right)$; call this $s$ the
\textit{next} task.
\end{lemma}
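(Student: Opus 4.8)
The plan is to prove Lemma \ref{lem:nocomm-twop} by exhibiting an explicit transformation that reallocates the $\ell_{ij}$ times among source tasks while preserving the objective value $\mu$ and all constraints of program \ref{prg:nocomm-twop-lp}. The key observation is that we have freedom in \emph{which} source task is paired with each non-source task: since source tasks have no predecessors, constraint \ref{itm:prec-nocomm-twop} does not force any $\ell_{sj}$ to be zero when $s$ is a source. The goal is to funnel all the ``pairing with a non-source task'' time onto a single distinguished source $s$, zeroing out $\ell_{s'j}$ for every other source $s'$ and every non-source $j$.

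\noindent\textbf{Choosing the next task and the swap argument.} First I would pick $s$ to be the source task that, in the given solution, spends the most total time running alongside non-source tasks (or, more carefully, a source task for which we can argue the reallocation terminates). For any other source $s' \neq s$ and any non-source $j$ with $\ell_{s'j} > 0$, I would attempt to swap this pairing onto $s$. The mechanism is a local exchange: if $s$ is at some moment running alone (contributing to $\ell_s$) while $s'$ runs with $j$, we can instead have $s$ run with $j$ and $s'$ run alone, trading $\ell_{s'j}$ against $\ell_s$ and $\ell_{s'}$ in a way that keeps each $d_j$ (constraint \ref{itm:dj-nocomm-twop}) fixed and leaves the total makespan $\mu$ unchanged, since the swap only relabels which source occupies which of the two processor slots during that interval. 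Because both $s$ and $s'$ are sources, no precedence constraint is violated by either pairing.

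\noindent\textbf{The main obstacle.} The hard part will be handling the case where $s$ is \emph{not} idle or running-alone during the intervals we wish to move onto it --- that is, where $s$ is itself already paired with some non-source task during the time $s'$ is paired with $j$. In that situation a direct swap would double-book $s$. I would resolve this by arguing at the level of total times rather than specific intervals: program \ref{prg:nocomm-twop-lp} tracks only the \emph{aggregate} durations $\ell_j, \ell_{ij}$, not a concrete temporal layout, so the constraint system is agnostic to when pairings occur. The real content is therefore a bookkeeping claim that the sum $\ell_s + \sum_{j \notin S} \ell_{sj}$ (the total time $s$ is busy) together with the corresponding totals for the other sources admits a redistribution in which all non-source pairings are absorbed by $s$, without exceeding the per-task duration budgets $d_j$. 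I expect the cleanest route is to show that moving $\ell_{s'j}$ mass onto $\ell_{sj}$ while compensating with $\ell_{s'}$ and $\ell_s$ adjustments can be done greedily, each step strictly decreasing the number of nonzero $\ell_{s'j}$ terms with $s' \neq s$, so the process terminates and yields the desired condition $\ell_{sj} = 0$ for no $s' \neq s$.

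\noindent\textbf{Closing the argument.} Finally I would verify that after the redistribution the objective is genuinely unchanged: each unit of time moved from $(s',j)$ to $(s,j)$ is matched by a unit moved between the solo-times, so $\sum_j \ell_j + \sum_i \sum_{j >_L i} \ell_{ij}$ is invariant, and constraints \ref{itm:dj-nocomm-twop} and \ref{itm:prec-nocomm-twop} continue to hold because we never alter any $d_j$ and never create a pairing across a precedence edge. This establishes the existence of the next task $s$ with $\ell_{sj} = 0$ failing only on $S$, as required.
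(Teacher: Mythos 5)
Your plan is to funnel every source/non-source pairing onto one distinguished source $s$, so that every \emph{other} source $s'$ ends up with $\ell_{s'j}=0$ for all $j\notin S$. Setting aside the fact that your closing sentence then names $s$ itself as the next task (after the funneling, $s$ is the one source \emph{most} entangled with non-sources; the next task would have to be some $s'\neq s$), the target itself is not always reachable, because constraint \ref{itm:prec-nocomm-twop} forbids creating a pairing $\ell_{sj}$ when $s<_G j$, and your proposal never justifies the assertion that the reallocation ``never create[s] a pairing across a precedence edge.'' Concretely: let $G$ have exactly the edges $s_1\to j_2$ and $s_2\to j_1$, all durations $1$, and consider the feasible solution $\ell_{s_1j_1}=\ell_{s_2j_2}=1$ with $\mu=2$. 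You cannot move the $(s_2,j_2)$ mass onto $s_1$ (since $s_1<_G j_2$) nor the $(s_1,j_1)$ mass onto $s_2$ (since $s_2<_G j_1$), so no re-assignment of which source accompanies which non-source yields a source with no non-source pairings. The unique solution with objective $2$ satisfying the lemma is $\ell_{s_1s_2}=\ell_{j_1j_2}=1$, and your exchange --- which only ever trades $\ell_{s'j}$ against $\ell_{sj}$, $\ell_s$, $\ell_{s'}$ --- can never create the source--source term $\ell_{s_1s_2}$ or the non-source--non-source term $\ell_{j_1j_2}$. This is the missing idea: the paper's swap takes two positive pairings $(i,j)$ and $(h,k)$ with $i,h\in S$, $j,k\notin S$, and $j\sim k$, and converts $\Delta=\min\{\ell_{ij},\ell_{hk}\}$ of each into $\ell_{ih}$ and $\ell_{jk}$. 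That move preserves every $d$ and the objective, respects constraint \ref{itm:prec-nocomm-twop} because distinct sources are incomparable and $j\sim k$, and strictly shrinks the set $Z$ of positive source/non-source pairings.

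There is a second gap in the terminal case. When no such swap is available --- i.e., the non-sources appearing in $Z$ are totally ordered by $<_G$ --- one must still exhibit the next task; the paper does this by taking the $<_G$-least such non-source $j$, choosing a source $s<_G j$, and showing $s$ cannot occur in $Z$ at all (any $(s,k)\in Z$ would force $s<_G j\le_G k$, contradicting constraint \ref{itm:prec-nocomm-twop}). Your proposal has no analogue of this step: the ``bookkeeping claim'' you defer to is exactly the part that fails, and the obstacle you flag (needing $\ell_s>0$ to compensate the trade) is a symptom of compensating with solo times rather than with the $\ell_{ih}$ and $\ell_{jk}$ terms.
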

\begin{proof}
Let $Z = \left\{(i,j) \bar i \in S, j \notin S, \ell_{ij} > 0\right\}$. We say
$Z$ has inductive pairs $(i,j)$ and $(h,k)$ if these pairs exist in $Z$ and $j
\sim k$. The proof is by complete induction on the set $Z$.

If $Z$ is empty, then the lemma holds for all $j$ in $S$. If $Z$ does not have
inductive pairs, then we define $T = \left\{j \bar \exists i \left((i,j) \in
Z\right)\right\}$. It must be that the tasks in $T$ are fully ordered by $<_G$
(and not just $<_L$); otherwise, we would have $j \sim k$ for some pairs. Let
$j$ be the least task in $T$. Since $j$ is in $T$, it is not in $S$, and
therefore there is some $s$ in $S$ such that $s <_G j$. There can be no $k \in
T$ for which $(s,k) \in Z$; otherwise, $s <_G j <_G k$ would violate constraint
\ref{itm:prec-nocomm-twop}. Because $s$ does not appear in $Z$, the lemma holds
for $s$.

For the inductive case, let $(i,j)$ and $(h,k)$ be inductive pairs in $Z$, so
that $j \sim k$. We also have that $i \sim h$ because $i$ and $h$ are both in
$S$. We define $\Delta = \min \left\{\ell_{ij},\ell_{hk}\right\}$, and update
$\ell_{ih} := \ell_{ih} + \Delta$, $\ell_{jk} := \ell_{jk} + \Delta$, $\ell_{ij}
:= \ell_{ij} - \Delta$, and $\ell_{hk} := \ell_{hk} - \Delta$.

Neither $d_j$ nor $\mu$ changes as a result of these updates. No $\ell$ is
updated to be less than zero. Constraint 2 still holds because $j \sim k$ and $i
\sim h$. Therefore, this new solution is also feasible and has the same
objective value. Because $\Delta = \min \left\{\ell_{ij},\ell_{hk}\right\}$,
either $\ell_{ij} = 0$ or $\ell_{hk} = 0$, so $Z$ decreases and we continue
inductively.
\end{proof}
The following algorithm constructs a feasible schedule $\sigma$ iteratively.
Program \ref{prg:nocomm-twop} is solved once initially to determine $d_j$ that
satisfy constraint \ref{itm:en-nocomm-twop}. In each iteration, a solution is
maintained for program \ref{prg:nocomm-twop-lp} using fixed values chosen so
that the total durations are $d_j$.
\begin{algorithm}
\label{alg:nocomm-twop}
Initially, solve program \ref{prg:nocomm-twop} for the input graph $G$ to
determine durations $d_j$ for each task $j$. Define $G_0 = G$, $\hat{d}^0_j =
d_j$, and $\hat{\sigma}_0$ to be an empty schedule.

Then, for each $l = 0, \ldots, n-1$, do the following. Let $S_l$ be the set of
sources in $G_l$. Solve program \ref{prg:nocomm-twop-lp} on $G_l$ using fixed
durations $\hat{d}^l_j$. Run lemma \ref{lem:nocomm-twop} as a subroutine to find
the next task $s_l \in S_l$ and updated values $\ell^l_j, \ell^l_{ij}$. To get
$\hat{\sigma}_{l+1}$, begin with $\hat{\sigma}_l$, schedule task $s$ alone for
$\ell^l_s$ time, and then for each other $i \in G_l$ schedule tasks $s$ and $i$
together for $\ell^l_{si}$ time. Define $G_{l+1}$ as $G_l - \{s\}$ and
$\hat{d}^{l+1}_i = \hat{d}^l_i - \ell^l_{si}$.

Our resulting schedule $\sigma$ is $\sigma_n$.
\end{algorithm}
We note that the updated values $\ell^l_j, \ell^l_{ij}$ for all tasks except $s$
constitute a feasible optimum for program \ref{prg:nocomm-twop-lp} on $G_{l+1},
\hat{d}^{l+1}_j$, so by saving these values we do not need to recompute a
solution to this program.
\begin{theorem}
\label{thm:nocomm-twop}
Let $\Pi(G)$ be program \ref{prg:nocomm-twop-lp} instantiated for graph $G$.
$\sigma$ is feasible and has makespan $\mu(\sigma)$ equal to the objective value
$\mu^\Pi_0$ of $\Pi(G_0)$.
\end{theorem}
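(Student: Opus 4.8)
The plan is to prove the two assertions separately: first that $\sigma$ is a feasible schedule, and then that its makespan equals $\mu^\Pi_0$. Throughout I write $s_l$ for the \emph{next} task peeled at iteration $l$, set $\tau_l = \ell^l_{s_l} + \sum_i \ell^l_{s_l i}$ for the total time $s_l$ occupies its processor in that iteration, and let $\mu^\Pi_l$ denote the optimal objective of $\Pi(G_l)$ under the durations $\hat d^l$. Since the iterations are laid out sequentially in time, the makespan of $\sigma$ is exactly $\sum_{l=0}^{n-1}\tau_l$, so both assertions reduce to controlling the $\tau_l$.

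For feasibility I would check the scheduling constraints one at a time. The two-processor bound is immediate: in each iteration $s_l$ runs alone or alongside a single task $i$ at a time, so at most two processors are ever active. For the duration constraint I would track each task $j$'s remaining duration $\hat d^l_j$ as it is paired with successive peeled sources; since $\hat d^{l+1}_j = \hat d^l_j - \ell^l_{s_l j}$ and $j$ runs for exactly $\hat d^{l_j}_j$ when it is finally peeled at iteration $l_j$, a telescoping sum shows $j$ runs for precisely $d_j$. The energy constraint then follows because the $d_j$ were fixed by solving program \ref{prg:nocomm-twop}, which includes constraint \ref{itm:en-nocomm-twop}. The precedence constraint is where lemma \ref{lem:nocomm-twop} does the real work: its guarantee that $\ell^l_{s_l j} = 0$ for every $j \notin S_l$ means the peeled source pairs only with \emph{other sources} of $G_l$, whose predecessors have all already been peeled, hence completed, in earlier and time-disjoint iterations; so no task ever runs before one of its predecessors finishes.

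For the makespan I would establish the recurrence $\mu^\Pi_l = \tau_l + \mu^\Pi_{l+1}$ and telescope, using $\mu^\Pi_n = 0$ for the empty graph to conclude $\sum_l \tau_l = \mu^\Pi_0$. The recurrence splits into two inequalities. For $\mu^\Pi_{l+1} \le \mu^\Pi_l - \tau_l$ I would invoke the remark preceding the theorem: deleting $s_l$ and its incident $\ell$-terms from the optimal solution of $\Pi(G_l)$ yields a feasible solution of $\Pi(G_{l+1})$ whose objective is exactly $\mu^\Pi_l - \tau_l$. For the reverse inequality I would take any optimal solution of $\Pi(G_{l+1})$ and reinsert $s_l$ with the saved pairings $\ell^l_{s_l i}$; this restores each paired task's duration from $\hat d^{l+1}_i$ back to $\hat d^l_i$ and restores $s_l$'s duration to $\hat d^l_{s_l}$, and it respects constraint \ref{itm:prec-nocomm-twop} precisely because, again by the lemma, $s_l$ pairs only with incomparable sources. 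The resulting feasible solution of $\Pi(G_l)$ has objective $\tau_l + \mu^\Pi_{l+1}$, giving $\mu^\Pi_l \le \tau_l + \mu^\Pi_{l+1}$.

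The hard part will be the reverse inequality $\mu^\Pi_l \le \tau_l + \mu^\Pi_{l+1}$: it requires checking that reattaching $s_l$ to an \emph{arbitrary} optimal solution of the reduced program, not merely to the particular $\ell^l$ values computed by the algorithm, produces a solution satisfying constraints \ref{itm:dj-nocomm-twop} and \ref{itm:prec-nocomm-twop} for $G_l$ with the correct durations $\hat d^l$. Verifying the duration bookkeeping and the constraint-\ref{itm:prec-nocomm-twop} compatibility of the reinserted pairings is the delicate step; everything else in the argument is routine telescoping and constraint checking.
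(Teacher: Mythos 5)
Your proof is correct and follows essentially the same route as the paper: the key step in both is the reinsertion argument showing $\mu^\Pi_l \le \tau_l + \mu^\Pi_{l+1}$ (the paper phrases it contrapositively: a cheaper solution to $\Pi(G_{l+1})$ extended by $s_l$'s saved values would beat $\mu^\Pi_l$), and the duration/energy bookkeeping via the telescoping $\hat{d}^{l+1}_j = \hat{d}^l_j - \ell^l_{s_l j}$ is identical. The only cosmetic difference is that you close the equality with the per-step inequality $\mu^\Pi_{l+1} \le \mu^\Pi_l - \tau_l$ and telescope, whereas the paper obtains $\mu(\sigma) \ge \mu^\Pi_0$ globally by converting $\sigma$ back into a feasible solution of $\Pi(G_0)$; your write-up is also more explicit about precedence feasibility, which the paper leaves implicit.
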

\begin{proof}
Let $\mu_l = \mu(\hat{\sigma}_l)$. We first prove that $\mu_{l+1} +
\mu^\Pi_{l+1} \le \mu_l + \mu^\Pi_l$ for all $l$. We have $\mu_{l+1} = \mu_l +
\ell^l_s + \sum_{i >_L s} \ell^l_{si}$ because we schedule $s$ and $(s,i)$ to
get $\hat{\sigma}_{l+1}$. We also have that $\mu^\Pi_{l+1} \ge \mu^\Pi_l -
\ell^l_{s} - \sum_{i >_L s} \ell^l_{si}$; otherwise, we could use $\ell^l_{j \ne
  s}, \ell^l_{ij \ne s} = \ell^{l+1}_{j}, \ell^{l+1}_{ij}$ and get a better
solution to $\Pi(G_l)$.

Inductively, we must have $\mu_{l+1} + \mu^\Pi_{l+1} \le \mu^\Pi_0$ because
$\mu_0 = 0$. We further have that $\mu^\Pi_n = 0$ because $G_n$ is empty.
Therefore, $\mu(\sigma) \le \mu^\Pi_0$. Because we can construct a feasible
solution to $\Pi(G_0)$ from $\sigma$, we also have $\mu(\sigma) \ge \mu^\Pi_0$.

We also must prove that $\sigma$ uses at most $E$ energy. The $d_j$ were chosen
by program \ref{prg:nocomm-twop} to satisfy constraint \ref{itm:en-nocomm-twop}.
Program \ref{prg:nocomm-twop-lp} satisfies constraint \ref{itm:dj-nocomm-twop}
with equality. Because we update $\hat{d}^{l+1}_i = \hat{d}^l_i - \ell^l_{si}$,
the total duration for which each task $j$ runs is $d_j$, and therefore the
total energy used is at most $E$.
\end{proof}
\section{Small Communication Delays}
\label{sec:smcomm}
In this section we consider a model in which each edge $i \too j$ in $G$ has an
associated communication delay of $c_{ij}$. When task $i$ finishes, if task $j$
is scheduled on a different processor, it may not begin until at least $c_{ij}$
time has passed. (If $j$ is started on the same processor, it may begin
immediately after $i$ finishes, as long as it is not otherwise constrained.)
This models a system in which there is a time/energy trade-off for processors
and a constant-speed interconnect running at $s$ data per time that must
transfer $s \, c_{ij}$ data from $p_i$ to $p_j$ and can make any number of
point-to-point transfers at a time.

Because we consider small communication delays, we assume a given $\rho \ge 1$
for which $c_{ij} \le d_k / \rho$ for all $i \too j$ and $k$, and we compare
against optimal solutions that also satisfy this constraint. Our results improve
with increasing $\rho$ (and smaller communication delays.)
\subsection{$m \ge n$ Processors}
We extend the approach of \cite{HM1995,MK1997} to account for energy. We first
show that the following non-convex program is equivalent to the scheduling
problem. $x_{ij}$ is an indicator variable that is $1$ if $j$ follows $i$ on
processor $p_i$ without waiting the $c_{ij}$ communication delay time, and $0$
otherwise.
\begin{prg}{prg:smcomm-mgen-nc}{Minimize $\mu$}
\item $t_j \ge t_i + d_i + (1-x_{ij}) \, c_{ij}$ for all $i \too j$
\item $\sum_{j : i \too j} x_{ij} \le 1$ for all $i$
\item $\sum_{i : i \too j} x_{ij} \le 1$ for all $j$
\item $\mu \ge t_j + d_j$ for all $j$
\item $x_{ij} \in \{0, 1\}$ for all $i \too j$
\item $\sum_j \en{j}{d_j} \le E$
\item $c_{ij} \le d_k / \rho$ for all $i \too j$, $k$
\item \label{itm:avg-smcomm-mgen} $\mu \ge \frac{1}{m} \sum_j d_j$
\item [ ] $\mu, t_j, d_j \ge 0$
\end{prg}
\begin{lemma}
\label{lem:smcomm-mgen-nc}
There is a solution $\mu, t_j, d_j, x_{ij}$ to program \ref{prg:smcomm-mgen-nc}
iff there is a feasible schedule $\sigma$ with makespan $\mu$ that satisfies
$d_i / \rho \ge c_{ij}$ and that does not use preemption, migration, or
recomputation.
\end{lemma}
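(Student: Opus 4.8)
The plan is to mirror the proof of Theorem \ref{thm:nocomm-mgen}: establish the ``iff'' by giving an explicit conversion in each direction and checking the program constraints against the schedule feasibility conditions one at a time. Since $m \ge n$ here, processors are plentiful, so the only genuinely new ingredient relative to the zero-delay case is the handling of the matching variables $x_{ij}$ and the communication-delay condition (f).

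For the schedule-to-program direction I would set $\mu$, $t_j$, and $d_j$ exactly as in Algorithm \ref{alg:nocomm-mgen-sig2prg} (makespan, start time, and total running time; with no preemption, migration, or recomputation each $d_j$ is the length of a single interval). The new step is to define $x_{ij} = 1$ precisely when $i \too j$ and $t_j < t_i + d_i + c_{ij}$, i.e.\ when $j$ starts before the full communication delay from $i$ has elapsed. Constraint 1 is then immediate: when $x_{ij}=1$ precedence alone gives $t_j \ge t_i + d_i$, and when $x_{ij}=0$ the defining inequality gives $t_j \ge t_i + d_i + c_{ij}$. The energy, makespan, and averaging constraints (6, 4, \ref{itm:avg-smcomm-mgen}) carry over verbatim from the zero-delay argument, and constraint 7 is exactly the hypothesis $d_k/\rho \ge c_{ij}$.

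The crux of this direction, and the main obstacle, is verifying the degree constraints 2 and 3, and this is where the small-delay hypothesis is essential. First note that $t_j < t_i + d_i + c_{ij}$ forces $i$ and $j$ onto the same processor, since different processors would give $t_j \ge t_i + d_i + c_{ij}$ by (f). Now if two successors $j,j'$ of $i$ both had $x=1$, they share $p_i$ and both start after $i$ finishes, so (say $j$ first) $t_{j'} \ge t_j + d_j \ge (t_i+d_i) + d_j$; but $d_j \ge \rho\,c_{ij'} \ge c_{ij'}$, contradicting $t_{j'} < t_i + d_i + c_{ij'}$. Symmetrically, if two predecessors $i,i'$ of $j$ both had $x=1$, the later-finishing one $i'$ gives $t_j \ge t_{i'} + d_{i'} \ge (t_i+d_i)+d_{i'} \ge (t_i+d_i)+c_{ij}$, again contradicting $t_j < t_i + d_i + c_{ij}$. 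Hence at most one $x$ per task on each side. Intuitively, because every duration dominates every delay, a single task sitting between $i$ and $j$ on their common processor already ``pays'' the full delay, so each task can have at most one favored neighbor.

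For the program-to-schedule direction I would observe that constraints 2 and 3 make the set of edges with $x_{ij}=1$ a subgraph of $G$ with in- and out-degree at most one; since $G$ is a DAG this subgraph is a disjoint union of directed chains. I would assign each chain to its own processor (at most $n \le m$ chains, so enough processors) and schedule each $j$ at time $t_j$ for duration $d_j$ on its chain's processor, so no task is preempted, migrated, or recomputed. Within a chain consecutive tasks $i \too j$ have $x_{ij}=1$, so constraint 1 gives $t_j \ge t_i + d_i$ and the chain's intervals are disjoint and ordered, ruling out processor conflicts. For any edge $i \too j$ with $i,j$ on different processors we have $x_{ij}=0$ (distinct chains), so constraint 1 yields $t_j \ge t_i + d_i + c_{ij}$ and (f) holds; constraint 7 guarantees the output satisfies $d_i/\rho \ge c_{ij}$, and constraints 6 and 4 give the energy bound and makespan. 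Everything here is a direct transcription of the zero-delay proof with the chain decomposition added, so the degree argument of the previous paragraph remains the one nontrivial point.
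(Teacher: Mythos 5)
Your proposal is correct and follows essentially the same route as the paper: both directions use the same assignments, the same rule $x_{ij}=1 \iff t_j < t_i + d_i + c_{ij}$, and the same use of $\rho \ge 1$ (every duration dominates every delay) to verify the degree constraints 2 and 3, which is indeed the one nontrivial point. Your explicit chain decomposition of the $x=1$ edges is just a repackaging of the paper's rule of placing $j$ on $p_i$ whenever $x_{ij}=1$, so it does not constitute a genuinely different argument.
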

\begin{proof}
We prove the lemma by constructing a solution to one out of a solution to the
other. The following two algorithms perform these conversions.
\end{proof}
\begin{algorithm}
\label{alg:smcomm-mgen-prg2sig}
Given a solution $\mu, t_j, d_j, x_{ij}$ to program \ref{prg:smcomm-mgen-nc},
construct $\sigma$ as follows. Consider tasks $j$ according to any linear
extension $<_L$ of $<_G$. Schedule each task $j$ at time $t_j$ with duration
$d_j$. If there is some $i \too j$ for which $x_{ij} = 1$ schedule $j$ on $p_i$
and on a new processor otherwise.

Precedence, communication, and energy constraints are satisfied immediately. If
$j$ is scheduled on a new processor, then $j$ does not cause co-occurrence
conflicts with any prior task $k <_L j$ because it is on a new processor. If $j$
is scheduled on $p_i$, then it could co-occur with a prior task $k <_L j$ only
if $k$ is also scheduled on $p_i$ after $i$. But $k$ is not scheduled on $p_i$
after $i$ because $x_{ij} = 1$ and therefore $x_{ik} = x_{kj} = 0$ by
constraints $2$ and $3$, so in this case $j$ also does not cause co-occurrence
conflicts.
\end{algorithm}
\begin{algorithm}
\label{alg:smcomm-mgen-sig2prg}
Given a schedule $\sigma$, construct $\mu, t_j, d_j, x_{ij}$ as follows. Set
$\mu = \mathrm{makespan}(\sigma)$, $t_j = t_j(\sigma)$, $d_j = d_j(\sigma)$. Set
$x_{ij} = 1$ if $t_j < t_i + d_i + c_{ij}$ and $x_{ij} = 0$ otherwise.

All constraints except for $2$, $3$, and $8$ are satisfied immediately. Suppose
$x_{ij} = 1$. Since $t_j < t_i + d_i + c_{ij}$ and $c_{ij} \le d_k$, it must be
that no other task $k$ is scheduled between $t_i + d_i$ and $t_j$ on processor
$p_i$. Therefore, for any other task $k$ where $i \too k$ or $k \too j$, $x_{ik}
= x_{kj} = 0$, so constraints $2$ and $3$ hold. Lastly,
$\mathrm{makespan}(\sigma)$ is at least the average load on each processor, and
therefore constraint \ref{itm:avg-smcomm-mgen} holds.

We note that we need $\rho \ge 1$ to enforce that no task $k$ can be between $i$
and $j$ on $p_i$; if $\rho < 1$ then we could possibly have $d_k < c_{ij}$,
which would break our argument.
\end{algorithm}
We relax program \ref{prg:smcomm-mgen-nc} to a convex program by requiring only
that $0 \le x_{ij} \le 1$ rather than $x_{ij} \in \{0, 1\}$. We show that the
following deterministic rounding algorithm gives a feasible schedule $\sigma$
that satisfies the energy bound $E$ and is $\beta$-approximate in the makespan
for $\beta = \frac{2+2\rho}{1+2\rho}$.
\begin{algorithm}
\label{alg:smcomm-mgen}
Solve the convex relaxation of program \ref{prg:smcomm-mgen-nc}. For each
$x_{ij}$, Define $\hat{x}_{ij} = 1$ if $x_{ij} > 1/2$ and $\hat{x}_{ij} = 0$
otherwise. Take any linear extension $<_L$ of $<_G$. For each $j$ in order of
$<_L$, if there is an $i \too j$ such that $\hat{x}_{ij} = 1$, schedule $j$ on
$p_i$ to begin at time $t_i(\sigma) + d_i$. If there is no such $i$, schedule
$j$ on its own processor at time $\maxx{i \too j}{t_i(\sigma) + d_i + c_{ij}}$.
Always schedule $j$ for duration $d_j$.
\end{algorithm}
\begin{proof}
We first prove the stronger claim that for every task $j$, $t_j(\sigma) \le
\beta \, t_j$. We prove this inductively on $<_G$.

If $j$ is a source task, then there are no $i \too j$, so $t_j(\sigma) = 0 \le
\beta \, t_j$.

If $j$ is not a source task, then $j$ is scheduled either on $p_i$ for some $i
\too j$ or on its own processor. If $j$ is scheduled on $p_i$ then $\hat{x}_{ij}
= 1$, and therefore $t_j(\sigma) = t_i(\sigma) + d_i \le \beta \, t_i + d_i \le
\beta \, t_j$ by induction and constraint 1. Otherwise, there is some $i \too j$
for which $j$ is scheduled at time $t_j(\sigma) = t_i(\sigma) + d_i + c_{ij} \le
\beta \, t_i + d_i + c_{ij}$. $\hat{x}_{ij} = 0$, so $x_{ij} \le 1/2$, and
therefore $1 - x_{ij} \ge 1/2$ and $t_j \ge t_i + d_i + c_{ij}/2$. With $\beta =
\frac{2+2\rho}{1+2\rho}$ and $c_{ij} \le d_i/\rho$, algebraic manipulation
yields $t_j(\sigma) \le \beta t_j$.

With this stronger claim, we can bound the makespan by
\[
\mu(\sigma) = \maxx{j}{t_j(\sigma) + d_j} \le \maxx{j}{\beta \, t_j + d_j} \le
\beta \, \mu
\]
We now prove that $\sigma$ is a feasible schedule. Precedence, communication,
and energy constraints are satisfied immediately. If $j$ is scheduled on a new
processor, then $j$ does not cause co-occurrence conflicts with any prior task
$k <_L j$ because it is on a new processor. If $j$ is scheduled on $p_i$, then
$\hat{x}_{ik} = \hat{x}_{kj} = 0$ for any other task $k$, so in this case $j$
also does not cause co-occurrence conflicts.
\end{proof}
\subsection{$m$ Processors}
We use the following lemma from \cite{HM2001} as a black-box to obtain a bound
for $m$ processors.
\begin{lemma}
\label{lem:smcomm-m-hm}
In the case where $c_{ij} \le d_k/\rho$, given a schedule $\sigma_1$ with
makespan $\mu_1$ that uses more than $m$ processors we can construct a new
schedule $\sigma_2$ with makespan $\mu_2 \le \frac{1}{m} \sum_j(d_j) + (1-1/m)
\, \mu_1$ that uses only $m$ processors.
\end{lemma}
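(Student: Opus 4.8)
The plan is to mirror the partition argument of Lemma \ref{lem:nocomm-m}, adapting it to communication delays along the lines of the list-scheduling reduction of \cite{HM2001}. First I would build $\sigma_2$ greedily from $\sigma_1$, keeping every duration $d_j$ unchanged: I process the tasks in an order consistent with $<_G$ (for instance by nondecreasing start time in $\sigma_1$), and whenever a processor becomes free I assign to it the highest-priority available task, preferring to place a task on the processor of its latest-finishing predecessor so that the $c_{ij}$ delay is avoided whenever that processor is free. This construction never uses more than $m$ processors, and because the $d_j$ are carried over verbatim it automatically respects the precedence, communication, and energy constraints.

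Next I would cut $(0,\mu_2)$ at every task start and finish and split the resulting sub-intervals into the set $A$ of intervals in which all $m$ processors are active and the set $B$ of the rest, so that $\mu_2 = A + B$. Writing $W = \sum_j d_j$, the work bound gives $mA \le W - B$, since all $m$ processors run throughout $A$ while at least one runs throughout $B$; hence $\mu_2 = A + B \le W/m + (1-1/m)\,B$. It therefore suffices to prove $B \le \mu_1$, after which the claimed inequality $\mu_2 \le \frac{1}{m}\sum_j d_j + (1-1/m)\,\mu_1$ follows at once.

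To bound $B$ I would construct a critical chain $j_1 \too \cdots \too j_r$ in $G$ ending at the task that finishes last in $\sigma_2$, chosen so that every non-full interval is covered either by the execution of some chain task or by a communication-delay gap immediately preceding one: whenever a processor is idle and the next chain task is not yet running, that task must be blocked, and the only admissible reasons are that its latest predecessor is still executing or that the data from that predecessor has not yet arrived. Because the chain is a directed path in $G$, any schedule respecting the communication constraints---in particular $\sigma_1$---must serialize it, so $\mu_1 \ge t_{j_r}(\sigma_1) + d_{j_r}$ is at least the full span the chain is forced to occupy, and this is what lets me charge $B$ against $\mu_1$.

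The main obstacle is exactly this charging step. A communication gap appears in $\sigma_2$ only when two consecutive chain tasks land on different processors, and a priori such gaps need not be matched by corresponding delays in $\sigma_1$, which might co-locate the same chain and pay nothing. The small-delay hypothesis $c_{ij} \le d_k/\rho$ with $\rho \ge 1$ is what keeps this under control: each gap is at most the duration of the predecessor it follows, and the co-location rule above ensures that an uncovered gap can only arise when all $m$ processors were busy up to that predecessor's completion---that is, during a full interval contributing to $A$ rather than $B$. Making this accounting airtight, so that no $\rho$-dependent slack leaks into the bound $B \le \mu_1$, is the delicate point, and it is precisely the content I would import as a black box from \cite{HM2001}.
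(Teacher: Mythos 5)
A point of comparison first: the paper does not prove this lemma at all. It is imported verbatim from \cite{HM2001} as a black box, with only the added observation that durations are fixed, so the energy bound carries over. Your sketch therefore attempts strictly more than the paper does, and you are candid that the decisive charging step is itself deferred back to \cite{HM2001}; as written, the proposal is an outline of their argument rather than a proof.

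Read as a proof attempt, there is also a concrete gap earlier than the step you flag. You assert that at least one processor is busy throughout every interval of $B$, so that $mA \le W - B$ and hence $\mu_2 \le W/m + (1-1/m)B$. In the zero-delay setting of lemma \ref{lem:nocomm-m} this holds because an idle processor coexisting with an available task contradicts greediness. With communication delays it can fail: a task $j$ may have all predecessors finished yet be startable nowhere, because the only processor on which it could avoid the delay $c_{ij}$ is still occupied and the delay has not yet elapsed relative to any free processor. If every pending task is blocked this way, a sub-interval of $B$ passes with zero work done; splitting $B = B_1 + B_0$ into its busy and fully idle parts, the work bound only gives $A \le (W - B_1)/m$, and the resulting inequality $\mu_2 \le W/m + (1-1/m)B_1 + B_0$ requires $B_1 + \frac{m}{m-1}B_0 \le \mu_1$, strictly stronger than the $B \le \mu_1$ you set out to prove. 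The hypothesis $c_{ij} \le d_k/\rho$ with $\rho \ge 1$ is precisely what lets \cite{HM2001} absorb these gaps by dominating each one with the duration of the chain task that precedes it, but that absorption is the heart of the lemma, not a detail. Either cite the lemma and stop, as the paper does, or carry out the treatment of the $B_0$ intervals and the chain charging in full.
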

The model considered in \cite{HM2001} assumes fixed durations, so $d_j(\sigma_1)
= d_j(\sigma_2)$ and therefore $\sigma_2$ uses at most $E$ energy as well.
\begin{theorem}
\label{thm:smcomm-m}
We can construct a $(1+\beta(1-1/m))$-approximate schedule $\sigma$ for the case
in which $m < n$.
\end{theorem}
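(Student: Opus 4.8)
The plan is to mirror the two-stage strategy used for zero communication delays in section~\ref{sec:nocomm-m}: first build a near-optimal schedule that may use more than $m$ processors, then compress it onto $m$ processors using the black box of Lemma~\ref{lem:smcomm-m-hm}. Let $\mu^*$ be the optimal makespan over $m$-processor schedules that respect the delay bound and use no preemption, migration, or recomputation, and let $\sigma^*$ achieve it. First I would solve the convex relaxation of program~\ref{prg:smcomm-mgen-nc} (instantiated with the given $m$), obtaining optimal value $\mu_1$ and durations $d_j$. Applying algorithm~\ref{alg:smcomm-mgen-sig2prg} to $\sigma^*$ yields a feasible \emph{integral} point of the program with objective $\mu^*$; since the relaxation only enlarges the feasible region, $\mu_1 \le \mu^*$.

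Next I would run algorithm~\ref{alg:smcomm-mgen} on this relaxed solution to obtain a schedule $\sigma_1$. By the analysis of that algorithm, $\sigma_1$ is feasible, respects the energy bound (its durations are exactly the $d_j$ from the program, so $\sum_j \en{j}{d_j} \le E$), and has makespan $\mu(\sigma_1) \le \beta\,\mu_1$ with $\beta = \frac{2+2\rho}{1+2\rho}$; this $\sigma_1$ may use up to $n$ processors. I would then invoke Lemma~\ref{lem:smcomm-m-hm} to produce $\sigma = \sigma_2$ on exactly $m$ processors with makespan
\[
\mu(\sigma_2) \le \tfrac{1}{m}\textstyle\sum_j d_j + (1-1/m)\,\mu(\sigma_1).
\]
Because the lemma is stated for fixed durations, $\sigma_2$ still obeys the delay bound $c_{ij} \le d_k/\rho$ (whose hypothesis is guaranteed by constraint~$7$ of the program) as well as the energy bound $E$.

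Finally I would assemble the ratio. Constraint~\ref{itm:avg-smcomm-mgen} of program~\ref{prg:smcomm-mgen-nc} gives $\tfrac{1}{m}\sum_j d_j \le \mu_1$, so combining with $\mu(\sigma_1) \le \beta\,\mu_1$ yields $\mu(\sigma_2) \le \mu_1 + (1-1/m)\,\beta\,\mu_1 = (1+\beta(1-1/m))\,\mu_1 \le (1+\beta(1-1/m))\,\mu^*$, as claimed.

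I expect the only real subtlety to be bookkeeping the energy and delay invariants across the two transformations rather than any hard estimate: algorithm~\ref{alg:smcomm-mgen} never alters a task's duration and Lemma~\ref{lem:smcomm-m-hm} preserves durations, so both $\sum_j \en{j}{d_j} \le E$ and $c_{ij} \le d_k/\rho$ survive intact and the makespan arithmetic is then routine. The one point that must not be overlooked is that the average-load constraint~\ref{itm:avg-smcomm-mgen} has to be present in the program for the final inequality to close, exactly as it was needed in Lemma~\ref{lem:nocomm-m}.
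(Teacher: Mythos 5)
Your proposal is correct and follows essentially the same route as the paper's proof: solve the convex relaxation, round with algorithm~\ref{alg:smcomm-mgen} to get $\sigma_1$ with $\mu(\sigma_1)\le\beta\,\mu_1\le\beta\,\mu^*$, compress onto $m$ processors via Lemma~\ref{lem:smcomm-m-hm}, and close the bound using constraint~\ref{itm:avg-smcomm-mgen}. Your bookkeeping of the average-load term through the program's own solution (rather than through $\sigma^*$) is if anything slightly cleaner than the paper's phrasing, but the argument is the same.
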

\begin{proof}
We use algorithm \ref{alg:smcomm-mgen} to generate a schedule $\sigma_1$ with
makespan $\mu_1 \le \beta \, \mu$, where $\mu$ is the optimal objective value of
program \ref{prg:smcomm-mgen-nc}.

Let $\sigma^*$ be an optimal schedule for $m$ processors with makespan $\mu^*$.
We must have $\mu^* \ge \mu$; otherwise, we could use algorithm
\ref{alg:smcomm-mgen-sig2prg} to convert $\sigma^*$ into a feasible solution to
program \ref{prg:smcomm-mgen-nc} with smaller $\mu$.

$\sigma^*$ satisfies constraint \ref{itm:avg-smcomm-mgen} of program
\ref{prg:smcomm-mgen-nc} and therefore $\mu^* \ge \frac{1}{m} \sum_j d_j$. These
three bounds plus the bound in lemma \ref{lem:smcomm-m-hm} yield the theorem.
\end{proof}
\subsection{Large Communication Delays}
We use the approach of \cite{GKMP2008} to obtain a
$\frac{2(R+1)}{3}$-approximate schedule $\sigma_2$ in the case of large
communication delays where $c_{ij} / d_k \le R$.
\begin{algorithm}
\label{alg:lgcomm-mgen}
Define $\hat{c}_{ij} = c_{ij} / R$. Run algorithm \ref{alg:smcomm-mgen} on $G$,
$\hat{c}_{ij}$, and $\rho = 1$ to get a schedule $\sigma_1$. Scale up each
communication delay $\hat{c}_{ij}$ in $\sigma_1$ by $R$ to get $\sigma_2$.
\end{algorithm}
\begin{theorem}
\label{thm:lgcomm-m}
$\sigma_2$ is a $\frac{2(R+1)}{3}$-approximate schedule on $m \ge n$ processors.
\end{theorem}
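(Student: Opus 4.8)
The plan is to reduce everything to the analysis already carried out for algorithm \ref{alg:smcomm-mgen}, run on the scaled-down instance. First I would observe that because $c_{ij}/d_k \le R$ for all $i \too j$ and $k$, the scaled delays satisfy $\hat{c}_{ij} = c_{ij}/R \le d_k$ for all $k$; hence the scaled-down instance meets the hypothesis of the small-delay model with $\rho = 1$, for which $\beta = \frac{2+2\rho}{1+2\rho} = \frac{4}{3}$. Thus algorithm \ref{alg:smcomm-mgen} legitimately produces $\sigma_1$ together with an optimal solution $(\hat{t}_j, d_j, x_{ij})$ of the convex relaxation of program \ref{prg:smcomm-mgen-nc} on $G$, $\hat{c}_{ij}$, whose objective value I denote $\hat\mu$. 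The schedule $\sigma_2$ keeps the processor assignments and the rounded matching $\hat{x}_{ij}$ of $\sigma_1$ but inflates every delay back to $c_{ij} = R\,\hat{c}_{ij}$; equivalently, the start times of $\sigma_2$ obey the same recurrence as in algorithm \ref{alg:smcomm-mgen} with $c_{ij}$ in place of $\hat{c}_{ij}$, namely $t_j(\sigma_2) = t_i(\sigma_2) + d_i$ when $\hat{x}_{ij} = 1$, and $t_j(\sigma_2) = \maxx{i \too j}{t_i(\sigma_2) + d_i + c_{ij}}$ otherwise.

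The heart of the argument is an inductive claim along $<_G$ that $t_j(\sigma_2) \le \frac{2(R+1)}{3}\,\hat{t}_j$, paralleling the claim $t_j(\sigma) \le \beta\,t_j$ in the proof of algorithm \ref{alg:smcomm-mgen}. Writing $\gamma = \frac{2(R+1)}{3}$ and noting $\gamma \ge \frac{4}{3} \ge 1$, the source case and the same-processor case ($\hat{x}_{ij} = 1$) go through exactly as before, using $\gamma \ge 1$ and constraint $1$. The new-processor case is where the scaling shows up: for the maximizing predecessor $i$ I would write $t_j(\sigma_2) \le \gamma\,\hat{t}_i + d_i + R\,\hat{c}_{ij}$, substitute the relaxed constraint $\hat{t}_i \le \hat{t}_j - d_i - \frac{1}{2}\hat{c}_{ij}$ (valid since $\hat{x}_{ij} = 0$ forces $x_{ij} \le 1/2$, so $1 - x_{ij} \ge 1/2$), and then use $\hat{c}_{ij} \le d_i$ to trade the leftover delay term for a duration term. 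This reduces the required inequality to $\bigl(1 + R - \frac{3\gamma}{2}\bigr)\,d_i \le 0$, which holds with equality precisely at $\gamma = \frac{2(R+1)}{3}$. The makespan bound $\mu(\sigma_2) = \maxx{j}{t_j(\sigma_2) + d_j} \le \gamma\,\hat\mu$ then follows from the claim together with $\hat\mu \ge \hat{t}_j + d_j$ and $\gamma \ge 1$, exactly as in algorithm \ref{alg:smcomm-mgen}.

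To finish I would show $\hat\mu \le \mu^*$, where $\mu^*$ is the optimal makespan of the original (large-delay) instance over schedules without preemption, migration, or recomputation. Any such optimal $\sigma^*$ is also feasible for the scaled-down instance, since shrinking every delay from $c_{ij}$ to $\hat{c}_{ij} \le c_{ij}$ only relaxes the communication constraints, and it satisfies $\hat{c}_{ij} \le d_k$; so by Lemma \ref{lem:smcomm-mgen-nc} it yields a feasible integral solution of program \ref{prg:smcomm-mgen-nc} on the scaled-down instance with objective $\mu^*$, whence the convex relaxation optimum $\hat\mu$ is at most $\mu^*$. Combining, $\mu(\sigma_2) \le \gamma\,\hat\mu \le \frac{2(R+1)}{3}\,\mu^*$. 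Feasibility of $\sigma_2$ (precedence, communication, the energy bound, and absence of co-occurrence conflicts) is inherited from the feasibility argument for algorithm \ref{alg:smcomm-mgen}, since the durations $d_j$ and the matching $\hat{x}_{ij}$ are unchanged and $m \ge n$ supplies a fresh processor for each new-processor task.

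I expect the main obstacle to be the new-processor case of the induction: getting the constant to come out as exactly $\frac{2(R+1)}{3}$ hinges on spending the slack from $\hat{c}_{ij} \le d_i$ against the $(1-\gamma)\,d_i$ term, and one must check that the coefficient $R - \gamma/2 = \frac{2R-1}{3}$ is positive so that bounding $\hat{c}_{ij}$ by $d_i$ goes in the correct direction. A secondary point worth stating carefully is the $\hat\mu \le \mu^*$ step, i.e.\ that scaling every delay down by $R$ never increases the achievable makespan.
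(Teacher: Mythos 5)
Your proof is correct, but it takes a genuinely different route from the paper's. The paper treats $\sigma_1$ as a black box: it takes a critical path $P$ realizing $\mu_1 = \mu(\sigma_1)$, splits its length into a duration part $A$ and a delay part $B$, observes that each delay on $P$ is at most an adjacent duration so $B \le A$ and hence $A \ge \mu_1/2$, and concludes $\mu(\sigma_2) \le A + R\,B \le \frac{R+1}{2}\mu_1 \le \frac{R+1}{2}\cdot\frac{4}{3}\mu^*_1 \le \frac{2(R+1)}{3}\mu^*$. You instead re-run the per-task induction from the analysis of algorithm \ref{alg:smcomm-mgen} directly on $\sigma_2$, proving $t_j(\sigma_2) \le \gamma\,\hat{t}_j$ against the relaxation optimum $\hat\mu$ of the scaled instance and letting the algebra of the new-processor case force $\gamma = \frac{2(R+1)}{3}$. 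Your route is longer but has two advantages: it compares $\sigma_2$ directly to the LP value rather than chaining through $\mu_1 \le \frac{4}{3}\mu^*_1$, and it makes explicit what ``scaling up the delays'' does to the start times --- the paper's argument tacitly assumes the critical path of $\sigma_1$ still controls the makespan of $\sigma_2$, a small gap that your induction sidesteps entirely. The two checks you flag are both right and both needed: the sign of $R - \gamma/2 = \frac{2R-1}{3}$ requires $R \ge 1/2$ (implicit in the large-delay regime, since otherwise the claimed ratio would be below $1$), and $\hat\mu \le \mu^*$ follows as you say because the original optimum remains feasible under the shrunken delays and converts to an integral solution of the scaled program via Lemma \ref{lem:smcomm-mgen-nc}.
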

\begin{proof}
Let $\mu^*_1$ be the optimal makespan for $\hat{c}_{ij}$ and $\mu^*$ the optimal
makespan for $c_{ij}$. It is clear that $\mu^*_1 \le \mu^*$; otherwise, when we
scale down by $R$ we could get a better solution.

Let $\mu_1 = \mu(\sigma_1)$ and $\mu = \mu(\sigma_2)$. We have $\mu_1 \le
\frac{4}{3} \mu^*_1$ by algorithm \ref{alg:smcomm-mgen}. $\mu_1$ is the length
of some critical path $P$ of tasks in $\sigma_1$; $P$ has $g$ tasks and $g-1$
communication delays (some of which may be 0.)

Let $A$ be the contribution of task durations to $\mu_1$ and $B$ be the
contribution of delays. For each task $k$ in $P$ and each delay $\hat{c}_{ij}$
in $P$, we have $\hat{c}_{ij} \le d_k$. Therefore, $A \ge \mu_1/2$. When we
scale the $\hat{c}_{ij}$ up by $R$ to get $c_{ij}$ in $\sigma_2$, we get $\mu
\le A + R \, B \le \frac{1}{2} \mu_1 + \frac{R}{2} \mu_1 \le \frac{2(R+1)}{3}
\mu^*_1$, so $\mu \le \frac{2(R+1)}{3} \mu^*$.
\end{proof}
\section{Conclusions}
We have shown that for several scheduling problems, using convex programming, we
can obtain approximation bounds when energy constraints are present that are no
worse than the existing bounds obtained when energy is not considered. Our
analyses for the most part used the analyses of algorithms for the corresponding
energy-blind cases, adjusted where needed to fit the convex programming
formulation.

One possible direction for future work is to characterize the conditions
necessary for the approach in this paper to be applicable. We rely heavily on
the fact that, once the time/energy allocations are determined, our problem is
essentially an instance of the energy-blind problem, and can be solved using
energy-blind methods.

As an example, recomputation is a consideration for which our approach appears
to break. In scheduling models that permit multiple copies of tasks to be
computed on different processors, such as the model considered by Papadimitriou
and Yannakakis in \cite{PY1990}, the energy for each copy should be taken into
account. More work is necessary to determine the extent to which our methods are
applicable to these models. In particular, a convex programming formulation that
would permit energy constraints to be appended is not obvious and would be
interesting.
\bibliographystyle{plain}
\bibliography{ds4e}

\begin{thebibliography}{10}

\bibitem{AAWMI2009}
Ishfaq Ahmad, Roman Arora, Derek White, Vangelis Metsis, and Rebecca Ingram.
\newblock Energy-constrained scheduling of dags on multi-core processors.
\newblock In Sanjay Ranka, Srinivas Aluru, Rajkumar Buyya, Yeh-Ching Chung,
  Sumeet Dua, Ananth Grama, Sandeep K.~S. Gupta, Rajeev Kumar, and Vir~V.
  Phoha, editors, {\em Contemporary Computing}, volume~40 of {\em
  Communications in Computer and Information Science}, pages 592--603. Springer
  Berlin Heidelberg, 2009.

\bibitem{CB2001}
Chandra Chekuri and Michael Bender.
\newblock An efficient approximation algorithm for minimizing makespan on
  uniformly related machines.
\newblock {\em J. Algorithms}, 41:212--224, November 2001.

\bibitem{CG1972}
E.~G. Coffman and R.~L. Graham.
\newblock Optimal scheduling for two-processor systems.
\newblock {\em Acta Informatica}, 1:200--213, 1972.
\newblock 10.1007/BF00288685.

\bibitem{DM1995}
Srinivas Devadas and Sharad Malik.
\newblock A survey of optimization techniques targeting low power vlsi
  circuits.
\newblock In {\em Proceedings of the 32nd annual ACM/IEEE Design Automation
  Conference}, DAC '95, pages 242--247, New York, NY, USA, 1995. ACM.

\bibitem{Drozdowski2009}
Maciej Drozdowski.
\newblock Scheduling with communication delays.
\newblock In {\em Scheduling for Parallel Processing}, Computer Communications
  and Networks, pages 209--299. Springer London, 2009.

\bibitem{FKLB2008}
Vincent~W. Freeh, Nandini Kappiah, David~K. Lowenthal, and Tyler~K. Bletsch.
\newblock Just-in-time dynamic voltage scaling: Exploiting inter-node slack to
  save energy in mpi programs.
\newblock {\em Journal of Parallel and Distributed Computing}, 68(9):1175 --
  1185, 2008.

\bibitem{FKN1969}
M.~Fujii, T.~Kasami, and K.~Ninomiya.
\newblock Optimal sequencing of two equivalent processors.
\newblock {\em SIAM Journal on Applied Mathematics}, 17(4):pp. 784--789, 1969.

\bibitem{Geer2005}
D.~Geer.
\newblock Chip makers turn to multicore processors.
\newblock {\em Computer}, 38(5):11 -- 13, may 2005.

\bibitem{Gioiosa2010}
R.~Gioiosa.
\newblock Towards sustainable exascale computing.
\newblock In {\em VLSI System on Chip Conference (VLSI-SoC), 2010 18th
  IEEE/IFIP}, pages 270 --275, sept. 2010.

\bibitem{GKMP2008}
Rodolphe Giroudeau, Jean-Claude Konig, Farida~Kamila Moulai, and J\'{e}r\^{o}me
  Palaysi.
\newblock Complexity and approximation for precedence constrained scheduling
  problems with large communication delays.
\newblock {\em Theor. Comput. Sci.}, 401:107--119, July 2008.

\bibitem{GVV2009}
Lee~Kee Goh, B.~Veeravalli, and S.~Viswanathan.
\newblock Design of fast and efficient energy-aware gradient-based scheduling
  algorithms heterogeneous embedded multiprocessor systems.
\newblock {\em Parallel and Distributed Systems, IEEE Transactions on}, 20(1):1
  --12, jan. 2009.

\bibitem{Graham1966}
R.~L. Graham.
\newblock Bounds for certain multiprocessing anomalies.
\newblock {\em The Bell System Technical Journal}, XLV:1563--1581, 1969.

\bibitem{Graham1969}
R.~L. Graham.
\newblock Bounds on multiprocessing timing anomalies.
\newblock {\em SIAM Journal on Applied Mathematics}, 17:416--429, 1969.

\bibitem{HM1995}
Claire Hanen and Alix Munier.
\newblock An approximation algorithm for scheduling dependent tasks on $m$
  processors with small communication delays.
\newblock In {\em IEEE Symposium on Emerging Technologies and Factory
  Automation 167-190}, 1995.

\bibitem{HM2001}
Claire Hanen and Alix Munier.
\newblock An approximation algorithm for scheduling dependent tasks on $m$
  processors with small communication delays.
\newblock In {\em Discrete Applied Mathematics 108:239-257 Elsevier}, 2001.

\bibitem{HKQPS1999}
I.~Hong, D.~Kirovski, Gang Qu, M.~Potkonjak, and M.B. Srivastava.
\newblock Power optimization of variable-voltage core-based systems.
\newblock {\em Computer-Aided Design of Integrated Circuits and Systems, IEEE
  Transactions on}, 18(12):1702 --1714, dec 1999.

\bibitem{HF2005}
Chung-hsing Hsu and Wu-chun Feng.
\newblock A power-aware run-time system for high-performance computing.
\newblock In {\em Proceedings of the 2005 ACM/IEEE conference on
  Supercomputing}, SC '05, pages 1--, Washington, DC, USA, 2005. IEEE Computer
  Society.

\bibitem{JKS1993}
Hermann Jung, Lefteris~M. Kirousis, and Paul Spirakis.
\newblock Lower bounds and efficient algorithms for multiprocessor scheduling
  of directed acyclic graphs with communication delays.
\newblock {\em Inf. Comput.}, 105:94--104, July 1993.

\bibitem{KR2008a}
Jaeyeon Kang and S.~Ranka.
\newblock Dvs based energy minimization algorithm for parallel machines.
\newblock In {\em Parallel and Distributed Processing, 2008. IPDPS 2008. IEEE
  International Symposium on}, pages 1 --12, april 2008.

\bibitem{KR2009}
Jaeyeon Kang and S.~Ranka.
\newblock Assignment algorithm for energy minimization on parallel machines.
\newblock In {\em Parallel Processing Workshops, 2009. ICPPW '09. International
  Conference on}, pages 484 --491, sept. 2009.

\bibitem{KAS2010}
David King, Ishfaq Ahmad, and Hafiz~Fahad Sheikh.
\newblock Stretch and compress based re-scheduling techniques for minimizing
  the execution times of dags on multi-core processors under energy
  constraints.
\newblock In {\em Proceedings of the International Conference on Green
  Computing}, GREENCOMP '10, pages 49--60, Washington, DC, USA, 2010. IEEE
  Computer Society.

\bibitem{LTK2004}
Lap-Fai Leung, Chi-Ying Tsui, and Wing-Hung Ki.
\newblock Minimizing energy consumption of multiple-processors-core systems
  with simultaneous task allocation, scheduling and voltage assignment.
\newblock In {\em Proceedings of the 2004 Asia and South Pacific Design
  Automation Conference}, ASP-DAC '04, pages 647--652, Piscataway, NJ, USA,
  2004. IEEE Press.

\bibitem{MK1997}
Alix Munier and Jean-Claude Konig.
\newblock A heuristic for a scheduling problem with communication delays.
\newblock {\em OPERATIONS RESEARCH}, 45(1):145--147, 1997.

\bibitem{PY1988}
Christos Papadimitriou and Mihalis Yannakakis.
\newblock Towards an architecture-independent analysis of parallel algorithms.
\newblock In {\em Proceedings of the twentieth annual ACM symposium on Theory
  of computing}, STOC '88, pages 510--513, New York, NY, USA, 1988. ACM.

\bibitem{PY1990}
Christos~H. Papadimitriou and Mihalis Yannakakis.
\newblock Towards an architecture-independent analysis of parallel algorithms.
\newblock {\em SIAM J. Comput.}, 19:322--328, April 1990.

\bibitem{PSU2008}
Kirk Pruhs, Rob van Stee, and Patchrawat Uthaisombut.
\newblock Speed scaling of tasks with precedence constraints.
\newblock {\em Theor. Comp. Sys.}, 43:67--80, March 2008.

\bibitem{RVC2007}
Ravishankar Rao, Sarma Vrudhula, and Chaitali Chakrabarti.
\newblock Throughput of multi-core processors under thermal constraints.
\newblock In {\em Proceedings of the 2007 international symposium on Low power
  electronics and design}, ISLPED '07, pages 201--206, New York, NY, USA, 2007.
  ACM.

\bibitem{RaywardSmith1987}
V.~J. Rayward-Smith.
\newblock The complexity of preemptive scheduling given interprocessor
  communication delays.
\newblock {\em Information Processing Letters}, 25(2):123 -- 125, 1987.

\bibitem{RLFFSS2007}
Barry Rountree, David~K. Lowenthal, Shelby Funk, Vincent~W. Freeh, Bronis~R.
  de~Supinski, and Martin Schulz.
\newblock Bounding energy consumption in large-scale mpi programs.
\newblock In {\em Proceedings of the 2007 ACM/IEEE conference on
  Supercomputing}, SC '07, pages 49:1--49:9, New York, NY, USA, 2007. ACM.

\bibitem{SA2001}
Marcus~T. Schmitz and Bashir~M. Al-Hashimi.
\newblock Considering power variations of dvs processing elements for energy
  minimisation in distributed systems.
\newblock In {\em Proceedings of the 14th international symposium on Systems
  synthesis}, ISSS '01, pages 250--255, New York, NY, USA, 2001. ACM.

\bibitem{SI2010}
Venkateswaran Shekar and Baback Izadi.
\newblock Energy aware scheduling for dag structured applications on
  heterogeneous and dvs enabled processors.
\newblock In {\em Proceedings of the International Conference on Green
  Computing}, GREENCOMP '10, pages 495--502, Washington, DC, USA, 2010. IEEE
  Computer Society.

\bibitem{Ullman1975}
J.~D. Ullman.
\newblock Np-complete scheduling problems.
\newblock {\em J. Comput. Syst. Sci.}, 10:384--393, June 1975.

\bibitem{WTLC2010}
Lizhe Wang, Jie Tao, Gregor von Laszewski, and Dan Chen.
\newblock Power aware scheduling for parallel tasks via task clustering.
\newblock {\em Parallel and Distributed Systems, International Conference on},
  0:629--634, 2010.

\bibitem{WLDW2010}
Lizhe Wang, Gregor von Laszewski, Jay Dayal, and Fugang Wang.
\newblock Towards energy aware scheduling for precedence constrained parallel
  tasks in a cluster with dvfs.
\newblock In {\em Proceedings of the 2010 10th IEEE/ACM International
  Conference on Cluster, Cloud and Grid Computing}, CCGRID '10, pages 368--377,
  Washington, DC, USA, 2010. IEEE Computer Society.

\bibitem{ZHC2002}
Yumin Zhang, Xiaobo~Sharon Hu, and Danny~Z. Chen.
\newblock Task scheduling and voltage selection for energy minimization.
\newblock In {\em Proceedings of the 39th annual Design Automation Conference},
  DAC '02, pages 183--188, New York, NY, USA, 2002. ACM.

\end{thebibliography}
\end{document}